\useunder{\uline}{\ul}{}
\DeclareMathSymbol{\real}{\mathord}{AMSb}{"52}
\DeclareMathSymbol{\natural}{\mathord}{AMSb}{"4E}
\DeclareMathSymbol{\prob}{\mathord}{AMSb}{"50}
\DeclareMathSymbol{\blackbox}{\mathord}{AMSa}{"04}
\newlength{\dhatheight}
\newtheorem{proposition}{Proposition}
\newtheorem{remark}{Remark}
\newenvironment{definition}{\vspace{.4cm} \noindent {\bf Definition:}}{\vspace{.4cm}}
\newif\ifanonymous
\begin{document}

\ifanonymous
\title{From Means to Medians: Optimal Benchmark Design}
\author{}
\date{}
\else
\title{From Means to Medians: Optimal Benchmark Design\thanks{The author declares that he has no known competing financial interests or personal relationships that could have appeared to influence the work reported in this paper.}}

\author{\'Angel Hernando-Veciana\thanks{Universidad Carlos III de Madrid, calle Madrid 126, Getafe (Madrid) 28903, Spain. Email: angel.hernando@uc3m.es. I would like to thank Marcelo Bagnulo, William Fuchs, Efthymios Smyrniotis and seminar participants for their comments and suggestions. This research is part of the project I+D+i TED2021-131844B-I00, funded by MCIN/ AEI/10.13039/501100011033 and the European Union NextGeneration EU/PRTR.}\\Universidad Carlos III de Madrid}

\date{\today}
\fi

\maketitle

\begin{abstract}
Manipulation of price benchmarks has cost hundreds of millions of dollars, making manipulation-resistant design a first-order question. A puzzling empirical pattern offers a starting point: benchmarks in traditional finance are typically means, those in decentralised finance medians. I rationalise this divide with a variable cost rising in each price distortion's size and a fixed cost per manipulated price. The optimal benchmark is a mean when fixed costs are negligible and variable costs convex, a median when variable costs are negligible, and otherwise a trimmed mean, with weights reflecting cost heterogeneity.

\textbf{Key Words:} Financial benchmarks, oracles, decentralised finance.

\medskip
\noindent\textbf{JEL Classification:} G14, D47, D82, D86.

\end{abstract}

\section{Introduction}

A (price) \emph{benchmark} provides a summary measure of the prices prevailing in a given market at a point in time. Benchmarks are computed for most major financial markets; prominent examples include the ``4pm Fix,'' which reports daily exchange rates from foreign-exchange transactions sampled around 4pm London time, see \cite{ber16}, and the original LIBOR and EURIBOR, derived from opinion polls of panel banks, see \cite{hertro20}.\footnote{Opinion-based benchmarks are also used in some smart contracts; see \cite{williams2019decentralized}. A well-known example is Augur; see \cite{peterson2015augur}.}

By allowing contracts to be indexed to prevailing market prices, benchmarks underpin a wide range of financial instruments, including derivatives, floating-rate loans, and portfolio valuations. The notional value of such contracts can be large enough that one contracting party may have an incentive to shift the benchmark in its favour. There is ample evidence that such manipulation has occurred; see \cite{schros13} for prominent media accounts and \cite{eva18} for an academic investigation.

More recently, the advent of decentralised finance (DeFi) has significantly amplified these concerns. The combination of large contract values, automated execution, and limited legal enforceability makes manipulation of price oracles potentially profitable and difficult to deter. In this environment, smart contracts rely critically on external inputs in the form of financial benchmarks, commonly referred to as price oracles.\footnote{\cite{angeris2020improved} studies in detail examples of price oracles computed using transaction prices, including the Uniswap oracle developed in \cite{adams2020uniswap}. The term ``price oracle'' is sometimes used narrowly to refer to protocols that import external information into a blockchain, and more broadly to any protocol broadcasting benchmark values within the blockchain ecosystem. Under the latter interpretation, one can distinguish between on-chain and off-chain oracles; see \cite{weretal22}.} The most widely used platform for these applications is the Ethereum blockchain, which processes millions of transactions per day and supports very large gross on-chain transfer flows.

Consider, for instance, one of the most basic financial contracts: lending. Due to the lack of enforcing courts, lending is always collateralised and an oracle is used to determine the value of the collateral. This creates two distinct manipulation incentives. First, a borrower may inflate the oracle to borrow more than the collateral is worth. In a well documented case, the Mango Markets exploit, a trader used \$10 million to manipulate the platform's price oracle, inflating the value of a token used as collateral, to borrow and withdraw over \$116 million and then defaulted leading to significant platform losses.\footnote{See \url{https://infotrend.com/mango-markets-madness-a-case-study-on-the-mango-markets-exploit/} and \url{https://www.sec.gov/newsroom/press-releases/2023-13} for a description.} Second, since lending contracts have automatic liquidation clauses that can be triggered by any external party when the collateral value falls below a threshold, it may equally be profitable to deflate the oracle to activate liquidation against another party, see \cite{weretal22} for documented examples.

More broadly, the development of DeFi hinges on the availability of reliable oracles that can be safely incorporated into smart contracts. As the Financial Conduct Authority, a regulatory body of the UK Government, puts it:\footnote{See \cite{humphry2024review}.}
\begin{quote}
``Oracles and bridges are two of the most significant vulnerabilities that exist in the DeFi world today,'' and ``any inaccuracies in the information that oracles provide can instantly impact downstream applications that rely on the oracle.''
\end{quote}
Manipulation is therefore not a theoretical curiosity but a first-order concern in benchmark design, with documented losses running into hundreds of millions of dollars.\footnote{\cite{chainalysis2023oracle} report that DeFi protocols lost \$403.2 million across 41 oracle manipulation attacks in 2022 alone.} This paper asks directly: what benchmark formula best resists manipulation?

My model answers this question by establishing a general correspondence between the cost structure of manipulation and the optimal benchmark. This framework yields, as a natural corollary, an explanation for why the mean is more prevalent in traditional finance whereas the median dominates in DeFi.\footnote{For instance, the 4pm Fix (see \url{https://www.lseg.com/en/ftse-russell}), the EURIBOR (see \url{https://www.emmi-benchmarks.eu/benchmarks/euribor/methodology/}) and the USD LIBOR (see \url{https://www.ice.com/publicdocs/USD_LIBOR_Methodology.pdf}) use a (weighted) mean, see \cite{baldauf2022principal}, whereas popular oracles like MakerDAO or Tellor use the median, see \cite{eskandari2021sok}.}

Unlike \cite{baldauf2022principal}, whose analysis focuses exclusively on convex variable manipulation costs, costs that increase with the magnitude of the price distortion at an increasing rate, I recognise two further features of manipulation costs that are relevant in practice. First, manipulation may also entail fixed costs that are independent of the size of the distortion but increase with the number of prices manipulated. This is particularly relevant when prices are determined on decentralised exchanges on a blockchain, which is the typical setting in DeFi. In this environment, manipulating each price requires submitting at least one transaction to the blockchain, incurring a cost that is independent of the extent of the price manipulation.\footnote{While the most visible component is the so-called \emph{gas} cost (see \cite{perliv19}), the economically more significant costs in many manipulative attacks are bribes to miners. These consist of direct payments conditional on enforcing a particular transaction ordering within a block. In current blockchain systems, such payments are largely mediated through MEV auctions; see \cite{daia20flash} and \cite{wahrstaetter2023timetobribe}.\label{foogas}} Remarkably, variable manipulation costs may be negligible in DeFi due to the availability of tools such as flash loans, as argued by \cite{oos21} and \cite{arora2024secplf}.\footnote{That said, \cite{adams2020uniswap} argues that the presence of variable manipulation costs may depend on the type of prices used in the benchmark, and that such costs are more likely to arise when end-of-block prices are employed.} By contrast, fixed manipulation costs may be very large, often far exceeding standard transaction fees, as extensively documented in the MEV literature; see \cite{wahrstaetter2023timetobribe} and \cite{eskandari2021sok}. Second, convexity of the variable costs is not always well-founded: as I show in Appendix~\ref{appCPMM}, when variable costs stem from the trading fee charged by a constant-product automated market maker, they turn out to be concave in the magnitude of manipulation.

My first main result (Proposition~\ref{profixvar}) establishes that the optimal benchmark is the median when variable costs are negligible, and the mean when fixed costs are negligible \emph{and} variable costs are convex. This result is intuitive. When variable costs are negligible, the main concern arises from large distortions in a small number of prices, which are better mitigated by the median than by the mean. When fixed costs are negligible and variable costs are convex, the cheapest manipulation strategy is instead to spread the bias uniformly across all prices, concentrating it on a few would entail disproportionately high marginal costs. The mean is maximally robust to such uniform manipulation, whereas the median is highly sensitive to manipulations that affect a large fraction of prices.

Establishing this result in full generality is technically challenging, as the problem formally corresponds to an optimal contracting problem with multitask moral hazard. In this setting, the manipulator plays the role of the agent and chooses how much to distort each of the market prices that enter the benchmark. The analysis becomes even more involved in the presence of fixed costs or concave variable costs, since the manipulator's objective function is non-convex, even under linear contracts. As a result, the manipulator's optimal behavior cannot be fully characterized using first-order conditions alone.

My approach takes a different perspective. Consider, first, the case in which fixed costs are absent and the benchmark is the mean. With convex variable manipulation costs, the optimal strategy for the manipulator consists of uniformly increasing all market prices by the same amount. Such a uniform manipulation would increase by the same amount any benchmark satisfying basic and reasonable properties. For example, \emph{symmetric benchmarks}, those that coincide with the point of symmetry when the underlying price distribution is symmetric, would also increase by the same amount if the original distribution were symmetric. However, other symmetric benchmarks may admit alternative manipulation strategies that achieve the same effect at a lower cost. In this sense, the mean is optimal, as it maximizes the cost of manipulation among all symmetric benchmarks.

The case without variable costs is analogous, though slightly more complex. In this case, the optimal manipulation strategy of the median involves increasing a sufficient mass of prices so that only half of the total mass remains below the target value for the benchmark. With an appropriate adjustment, this minimal mass can be shifted in a way that preserves symmetry, ensuring that if the original price distribution is symmetric, the manipulated distribution remains symmetric as well. As in the previous case, this implies that the median maximizes the cost of manipulation among all symmetric benchmarks, and is therefore optimal in this sense.

Extending the analysis beyond these extreme cases is challenging. However, in the specific scenario where the initial distribution of prices is concentrated at a single point, one can generally characterize benchmarks that induce optimal manipulations that preserve symmetry, thereby ensuring optimality among symmetric benchmarks. For the case of convex variable costs, this turns out to be trimmed means\footnote{A trimmed mean is calculated by excluding all values below and above a given quantile from the computation of the mean.}, where the extent of trimming increases as the fixed costs grow larger relative to the variable costs. For the case of concave manipulation costs, the result is qualitatively similar: the optimal benchmark is again a trimmed mean, with the degree of trimming increasing in the size of the fixed costs. The key difference from the convex case is that some trimming is optimal even in the absence of fixed costs. A comparative static concerning market depth further sharpens the picture: when the variable cost arises from trading against a constant-product market maker, the optimal degree of trimming decreases with the liquidity of the pool, so prices from deeper pools, where manipulation is more costly at the margin, are aggregated closer to the mean, while shallower pools call for more trimming and, in the limit, the median.

The results above assume that all prices in the sample are drawn from the same population and face the same manipulation costs. In practice, however, benchmarks are often computed from samples that span different market conditions, for instance transactions recorded at different times of day when liquidity and trading costs vary. Such variation naturally translates into heterogeneity in manipulation costs: prices sampled during illiquid periods are cheaper to distort, while the cost of securing transaction ordering in a blockchain can fluctuate sharply across time. A natural question is whether a benchmark designer can exploit this heterogeneity to further deter manipulation.

My analysis shows that the answer is yes, and the optimal adjustment is intuitive. When manipulation costs are purely variable and convex, the benchmark should downweight prices that are cheaper to manipulate at the margin, leading to a weighted mean with weights proportional to marginal manipulation costs. When costs are purely fixed, what matters instead is how costly it is to include a given price in a manipulation campaign at all, leading to a weighted median in the spirit of \cite{edgeworth1888xxii},\footnote{\cite{edgeworth1888xxii} proposed combining observations from populations with different dispersions by assigning greater weight to more precisely measured sources. My weighted median is in this spirit: prices from subpopulations with higher fixed manipulation costs are harder to include in a manipulation campaign and therefore receive greater weight. The key difference is that my weights follow from a formal manipulation-resistance criterion rather than from statistical efficiency.} with weights proportional to the fixed costs of manipulation. I also show how these insights generalize when both cost components are present, yielding a trimmed mean whose weights and trimming parameter reflect both variable and fixed costs.

The most closely related paper is that of \cite{baldauf2022principal}, to which I have already referred. The main differences are twofold. First, it assumes that manipulation entails no fixed costs that grow with the number of prices manipulated. Second, it restricts attention to convex variable costs, whereas I show that concavity is the empirically relevant case when variable costs stem from the trading fee of a constant-product automated market maker. Extending the analysis along either dimension is technically challenging: once fixed costs are present the manipulator's problem is no longer convex, ruling out the first-order approach standard in the optimal contract design literature, see \cite{bolton2004contract}.

Also closely related are \cite{duffie2021robust} and \cite{frei2021optimal}, who study benchmark design when the manipulator's only margin is the volume of distorting trades. In \cite{duffie2021robust} the manipulator chooses the volume of its orders, which are all executed at a common distorted price, at a cost linear in that volume; in \cite{frei2021optimal} it allocates an exogenously fixed total volume of distorting trades across periods. In both, manipulation is parameterised by traded volume, whereas in my model the manipulator chooses the magnitude of each individual price distortion subject to an explicit cost of manipulation comprising variable and fixed components.

More broadly, my work connects to the economics literature on decision-making from manipulable data, which stresses that when allocations depend on data, the agents who generate that data distort it in anticipation of how it will be used; see, for instance, \cite{frankel2022improving}. A benchmark is precisely such a data-dependent decision rule, and my analysis asks how its design shapes the incentives to manipulate the underlying prices.

My work is also related to the literature on robust statistics, see \cite{huber} for a comprehensive review. This literature studies the statistic that minimizes deviations from the true value subject to a perturbation chosen to maximize deviations. This problem, like mine, is formulated as a minimax, but the difference is that the assumptions that define the set of feasible deviations are reasonable for exogenous outliers but not for the manipulation problems that arise in financial markets. In particular, these deviations are not consistent with the existence of a fixed cost of manipulation, one of the main elements of my analysis. A second difference is that robust statistics weighs robustness against \emph{statistical efficiency}: the precision of the estimator when the model holds and no contamination occurs. I, instead, focus exclusively on robustness to manipulation, taking the point of symmetry as the target and asking which benchmark is hardest to manipulate. These two concerns are complementary rather than competing. Because the benchmarks that emerge as manipulation-optimal (means, trimmed means, and medians, possibly weighted) are exactly the L-estimators\footnote{An L-estimator is a linear combination of order statistics; equivalently, a location functional of the form $\int_0^1 J(u)\,G^{-1}(u)\,du$ for a weight function $J$ on the quantiles. The mean, median, and trimmed means are the leading examples, corresponding to uniform weights, all weight at the median, and uniform weights on a central quantile range, respectively.} for which the efficiency--robustness trade-off is classically understood, my manipulation-resistance criterion can be combined with the asymptotic- or minimax-efficiency criteria of robust statistics to discipline benchmark design along both dimensions at once.

My work is also related to the fast growing literature on DeFi and in particular to the analysis of oracles and their vulnerabilities, see \cite{weretal22}. This literature, however, does not analyse the performance of the possible rules used to aggregate prices but instead focuses on an empirical investigation of the performance of the different oracles used in reality, see \cite{liu2021first}, or classify the different oracles used in practice with respect to different technical aspects of their protocols, see \cite{pasdar2021blockchain} for a comprehensive discussion. This literature is also interested in the implementation of price oracles not based on transactions but on honest reporting by trusted parties, see for instance \cite{buterin2014schellingcoin}.

Finally, my analysis also complements work on securing specific DeFi applications, such as \cite{arora2024secplf}, who deter over-borrowing in protocols for loanable funds by wrapping the oracle in an asymmetric rule that caps upward moves in collateral value, at the cost of distorting its valuation. A benchmark that is already optimally manipulation-resistant reduces the distortion such a wrapper must impose, and analogous wrappers could address strategic liquidations or other directional attacks.

\ifanonymous
All proofs are in Appendix~\ref{appProofs}.
\else
The remainder of the paper is organised as follows. Section~\ref{sec:Model} presents the model. Section~\ref{secOpt} derives the optimal benchmark in the baseline case of a homogeneous market, characterising the mean, median, and trimmed mean as optimal depending on the relative importance of fixed and variable manipulation costs. Section~\ref{secHet} extends the analysis to heterogeneous markets and establishes the optimality of weighted means, weighted medians, and weighted trimmed means. Section~\ref{secConclusion} concludes. All proofs are in Appendix~\ref{appProofs}. Appendix~\ref{appCPMM} derives the manipulation cost function implied by a constant-product market maker and verifies that it satisfies the concavity and regularity conditions used in the analysis.
\fi

\section{The Model}

\label{sec:Model}

I assume a population of {\it unmanipulated prices} described by a probability distribution $F$. For some results I assume that the distribution is a degenerate distribution at a price; I refer to this case as the case of no dispersion. For the rest of the paper, I assume that $F$ has a symmetric and strictly single-peaked\footnote{A function $u:\mathbb{R}\to\mathbb{R}$ is \emph{strictly single-peaked} if there exists a unique point $x^*\in\mathbb{R}$ such that
\begin{itemize}
\item for all $x<y\le x^*$, $u(x)<u(y)$,
\item for all $x^*\le y<z$, $u(y)>u(z)$.
\end{itemize}} density $f$.

The restriction to symmetric distributions is both natural and well-motivated. For asymmetric distributions, standard location functionals, such as the mean, median, and trimmed mean, generally disagree, and there is no canonical choice: the very meaning of ``location'' depends on the purpose of the analysis rather than being an intrinsic property of the distribution. Restricting to symmetric distributions thus isolates the comparison of benchmarks to their manipulation-resistance properties, without conflating it with the separate question of which functional best represents location under asymmetry.

A strategic agent, {\it the manipulator}, can manipulate these prices at a cost. In particular, the manipulator's strategy can be described with a conditional distribution function $G^c(\cdot|p)$, for any $p\in \real$, that describes how the prices of all the markets with unmanipulated price $p$ are shifted into a new distribution of prices. In the case of no manipulation, $G^c(\cdot|p)$ is simply a degenerate distribution at $p$. The manipulator's strategy gives rise to the {\it distribution of manipulated prices}:
\[G(p)\equiv \int G^c(p|\tilde p) dF(\tilde p).\]

A {\it benchmark} is a locational functional applied to a distribution of prices. In my analysis, I either require very little structure on the set of rules that may be used to compute the benchmark or restrict attention to particular families of rules that have been used in practice. In particular, I call {\it symmetric benchmarks} those that are equal to the point of symmetry in the case of a symmetric distribution of prices. This is a minimal restriction imposed on location functionals, see for instance \cite{jaeckel1971}. Besides, it does not impose any structure on distributions that are not symmetric. I define the $\tau$-trimmed mean, for $\tau \in [0, 0.5]$, as the average calculated over the modified distribution obtained by truncating the original distribution at the $\tau$ and $1-\tau$ quantiles. Clearly, $\tau=0$ and $\tau=1/2$ render the $\tau$-trimmed mean equal to the standard mean and the median, respectively. The $\tau$-trimmed means are symmetric benchmarks.\footnote{The trimmed means form a canonical family in robust statistics, where the trimming fraction $\tau$ parameterises a trade-off between statistical efficiency, maximised by the mean ($\tau=0$) under light-tailed price noise, and robustness, which increases with $\tau$. The same family parameterises the trade-off relevant here, between resistance to manipulation under variable versus fixed costs.}

As a realistic example, one can think of the distribution of prices as the marginal prices offered by a set of AMMs at different moments in time. These marginal prices can be manipulated by submitting trade orders to the AMM whose only purpose is to upset the computed benchmark. Thus, the manipulated prices correspond to the marginal prices displayed by the AMM's and used in the computation of the benchmark and the unmanipulated prices refer to the counterfactual of what could have been the marginal prices if there were no manipulation. Alternatively, one could assume that the benchmark is computed out of a sample of transaction prices and manipulation is done by submitting additional trades.

Manipulating a price is costly. In particular, the cost of a manipulating strategy $G^c(\cdot|p)$ is equal to:
\[\int \int_{{\tilde p\not =p}} \left( c(\tilde p-p) +k \right) dG^c(\tilde p|p) dF(p),\]
where $c$ is a twice differentiable and strictly increasing function (except when $c(\cdot)=0$ is assumed), equal to zero at zero and symmetric around zero,\footnote{The symmetry of $c$ around zero means that the variable cost of a price distortion depends only on its magnitude, not its direction. This is natural when manipulation requires trading an asset: buying or selling the same quantity incurs the same transaction cost, and the resulting price impact is symmetric in absolute terms.} and $k\geq 0$. Thus, $c$ is the {\it variable cost} of manipulation and $k$ the {\it fixed cost} of manipulation.\footnote{The fixed cost $k$ is incurred once per manipulated price, irrespective of the size of the distortion. This per-price structure is natural in the DeFi context. Benchmarks typically aggregate prices sampled at different times or across different venues, so each manipulated observation requires securing a separate, independently arranged transaction ordering, often in a different block. Even when multiple manipulated transactions fall within the same block, the cost of securing their combined ordering scales with the number of such transactions.} The cost of manipulation is explained in that manipulating a price requires trading the asset. The variable cost arises because greater manipulation requires larger orders, which typically implies higher costs.\footnote{\cite{duffie2021robust} assumes variable manipulation costs that are linear in the traded volume, and \cite{baldauf2022principal} provides examples in which they are increasing and convex. I show in Appendix~\ref{appCPMM} that for the particular problem of variable costs originated by the trading fee charged by an automated market maker (AMM), the variable costs turn out to be concave if the AMM uses the constant product market maker.} Fixed costs, instead, stem from the need to secure transaction inclusion or priority independently of the size of the trade. For instance, in the context of blockchains, manipulating a price immediately before it is fed into a benchmark requires enforcing a particular ordering of transactions within a block. As explained in the introduction, this typically entails paying a potentially large fee to the block builder.

When the distribution of unmanipulated prices is symmetric, any symmetric benchmark coincides with the axis of symmetry. The bias induced by manipulation is therefore given by the difference between the value of the benchmark under the manipulated distribution and the axis of symmetry of the unmanipulated distribution of prices. The focus on symmetric benchmarks allows me to abstract from normative questions about how benchmarks should respond to price asymmetries and ensures that differences in manipulability arise solely from strategic distortions.

\section{Optimal Benchmarks}
\label{secOpt}

The criterion I use to evaluate a benchmark is the minimum cost of manipulation
necessary to implement a given bias $\Delta$: the larger this minimum cost, the
better the benchmark. This criterion is consistent with the collateralised lending
example discussed in the introduction, where the manipulator minimizes the cost of
triggering the liquidation clause while the benchmark designer seeks to maximize it.
This perspective is closely related to the Grossman--Hart approach to moral hazard
\citep{grohar83}, in which the principal first characterizes the minimum payment
required to induce each effort level, then selects the effort level that maximizes
profits. Analogously, each benchmark generates an endogenous cost function that
determines, for any target bias $\Delta$, the minimum cost a manipulator must incur
to implement it.

\begin{definition}$\,$
\begin{itemize}
\item A benchmark is optimal at $\Delta$ if it is the symmetric benchmark that maximizes the minimum cost of inducing bias $\Delta$. A benchmark is optimal if it is optimal at every $\Delta$.
\item A benchmark dominates another benchmark at $\Delta$ if the former has a greater minimum cost of implementing the bias $\Delta$ than the latter. A benchmark dominates another benchmark if it dominates the latter for every $\Delta$.
\end{itemize}
\end{definition}

By the symmetry of $c$ around zero and of $F$ around its centre, the minimum cost of implementing bias $-\Delta$ equals that of implementing $+\Delta$. It therefore suffices to characterise optimal benchmarks for $\Delta>0$, which is what I do throughout.

The analysis is based on the following family of manipulative strategies.

\begin{definition}
A manipulative strategy is {\it symmetric} if it preserves the symmetry of the distribution of prices, i.e. it induces a distribution of manipulated prices that is symmetric.
\end{definition}

The following natural result will be used throughout the analysis.

\begin{proposition}
A benchmark $B$ is optimal at $\Delta$ if there exists a manipulation strategy that (i) is cost-minimising for $B$ at $\Delta$, i.e.\ achieves bias $\Delta$ for $B$ at the lowest possible cost, and (ii) is symmetric, i.e.\ induces a symmetric distribution of manipulated prices.
\label{prosym}
\end{proposition}

A manipulator employing such strategy implements the bias $\Delta$ in any symmetric benchmark, but it is only cost minimizing in the proposed benchmark, and thus any other symmetric benchmark can be manipulated at a potentially lower cost.

Note that the proposition does not rule out optimal manipulation strategies that are asymmetric. Indeed, the most natural manipulation may not be symmetric. I discuss this point in detail, with a concrete example, after Proposition~\ref{profixvar} below.

My first main result covers the polar cases in which either fixed costs or variable costs are absent, where the case of no fixed costs requires the additional assumption of convex variable costs.

\begin{proposition}
The mean is an optimal benchmark in the absence of fixed costs, i.e.\ $k=0$, provided variable costs are convex. The median is an optimal benchmark in the absence of variable costs, i.e.\ $c(\cdot)=0$.
\label{profixvar}
\end{proposition}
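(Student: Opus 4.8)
The plan is to invoke Proposition \ref{prosym}: for each target price $P$ I will exhibit a cost-minimal manipulation strategy whose induced final distribution is symmetric around $P$, which by that proposition makes the benchmark optimal at $P$. Since all primitives are symmetric around $0$ it suffices to treat $P\ge 0$, and $P=0$ is immediate because no manipulation is needed.

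\textbf{No fixed cost ($k=0$).} View a manipulation strategy as a joint law of $(p,\tilde p)$ with $p$ distributed as $F$. The mean benchmark equals $P$ exactly when $\int\!\int(\tilde p-p)\,dG^c(\tilde p\mid p)\,dF(p)=P$, while its cost is $\int\!\int c(\tilde p-p)\,dG^c(\tilde p\mid p)\,dF(p)$ (the restriction $\tilde p\neq p$ is immaterial since $c(0)=0$). Strict convexity of $c$ and Jensen's inequality give that this cost is at least $c(P)$, with equality if and only if $\tilde p-p=P$ almost surely. Hence the (essentially unique) cheapest way to move the mean to $P$ is the uniform translation $\tilde p=p+P$, which carries $F$ into a distribution symmetric around $P$. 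Proposition \ref{prosym} then makes the mean optimal at $P$, and since $P$ is arbitrary, the mean is optimal.

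\textbf{No variable cost ($c=0$).} Now the cost of a strategy is $k$ times the mass of prices that are moved, so for a given final distribution $G$ with density $g$ the cheapest transport keeps the overlap $\min(f,g)$ in place and relocates the rest, making the minimal cost of implementing $G$ equal to $k\int(f-g)^+$. Requiring the median of $G$ to be $P$ forces $G((-\infty,P])\le 1/2<F(P)$, so $\int_{-\infty}^{P}(f-g)^+\ge F(P)-1/2$ and the minimum cost of raising the median to $P$ is at least $k\big(F(P)-1/2\big)$. The substantive step is to attain this bound with a $g$ symmetric around $P$. I construct it by (i) putting on $[P,\infty)$ the density $f$ plus a nonnegative ``bump'' of total mass $F(P)-1/2$, placed so that $g(x)\le f(2P-x)$ for all $x\ge P$ — feasible because single-peakedness and symmetry of $f$ leave room $\int_0^\infty\!\big(f(P-t)-f(P+t)\big)\,dt=2F(P)-1\ge F(P)-1/2$ — and (ii) reflecting, $g(P-t):=g(P+t)$. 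Then $g$ is symmetric around $P$ (so $G$ has median $P$), $g\ge f$ on $[P,\infty)$ and $g\le f$ on $(-\infty,P]$, whence $\int(f-g)^+=F(P)-1/2$; thus this symmetric $G$ is a cost-minimal manipulation implementing median $=P$, and Proposition \ref{prosym} gives optimality of the median at $P$, hence everywhere.

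The main obstacle is precisely the construction in the $c=0$ case: one must verify that the extra mass $F(P)-1/2$ can be loaded onto the right of $P$ while staying below the reflected density $f(2P-\cdot)$, since this is exactly what keeps the symmetrised $g$ dominated by $f$ to the left of $P$ and therefore guarantees that the symmetric strategy is genuinely cost-minimal rather than merely feasible. Everything else — the Jensen step for the mean, the reduction to $P\ge0$, and the transport description of the fixed-cost-minimising plan — is routine.
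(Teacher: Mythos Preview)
Your proof is correct and follows the paper's overall strategy: in both cases you exhibit a cost-minimising manipulation whose final distribution is symmetric around $P$ and then invoke Proposition~\ref{prosym}. The $k=0$ argument is identical to the paper's, with Jensen made explicit.

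For $c=0$ the two arguments diverge in presentation. The paper writes down a specific target density $\tilde g(y)=\tfrac{1}{2}\big(f(y)+f(y-2P)\big)$ together with the conditional manipulation $G^c(\cdot\mid y)$ that produces it, and then verifies by direct calculation that the mass moved is exactly $F(P)-\tfrac12$ and the final density equals $\tilde g$. You instead work at the level of final distributions: you identify the minimal cost of realising any $G$ as $k\int(f-g)^+$ (total variation), bound this below by $k(F(P)-\tfrac12)$ via the median constraint, and then argue that some symmetric $g$ attains the bound by showing the available headroom $\int_0^\infty\big(f(P-t)-f(P+t)\big)\,dt=2F(P)-1$ exceeds the required bump mass $F(P)-\tfrac12$. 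Your route is slightly more conceptual and avoids the explicit conditional-distribution bookkeeping; the paper's route has the advantage of naming one concrete $\tilde g$ (which is in fact an instance of your construction, taking the bump to be $\tfrac12\big(f(2P-\cdot)-f(\cdot)\big)$). One small imprecision: for the median to be at least $P$ you need $G\big((-\infty,P)\big)\le\tfrac12$, not $G\big((-\infty,P]\big)\le\tfrac12$; this is harmless once you pass to densities but worth stating correctly.
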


A few remarks on Proposition~\ref{profixvar} are in order. The proof identifies,
in each case, a symmetric cost-minimizing manipulation strategy, which by
Proposition~\ref{prosym} suffices to establish optimality. However, the result
does not claim that this strategy is the only cost-minimizing one, nor that
the optimal manipulation must be symmetric. To see this concretely, consider
the median in the absence of variable costs. As the proof makes explicit, any
manipulation strategy that transfers exactly $F(P+\Delta)-1/2$ mass from strictly
below $P+\Delta$ to points weakly above it achieves the same minimum cost,
regardless of how the destinations are chosen; the proof simply selects one
such strategy that happens to preserve symmetry. Now suppose a small variable
cost is added. Because the variable cost is strictly increasing in the distance
moved, it becomes strictly cheaper to relocate each source as little as possible.
The optimal strategy therefore concentrates all manipulated mass at $P+\Delta$
itself, producing a final distribution that is highly asymmetric.

One might therefore worry that the optimality result is fragile: if the symmetric
manipulation strategy is not the unique optimum, perhaps a small perturbation of
the environment would cause the median to lose its optimality. This concern does
not arise. The symmetric strategy from the unperturbed problem remains feasible
under the perturbation, and its cost increases by at most a term proportional to
the size of the perturbation. Since no symmetric benchmark can have a higher
minimum manipulation cost than this, the median remains approximately optimal,
with the approximation error vanishing as the perturbation vanishes.

The general case in which both fixed and variable manipulation costs are present is substantially more complex. Intuitively, it corresponds to a multitask moral hazard problem in which the agent faces nonconvex preferences, even under linear contracts. Nevertheless, there are two particular cases in which Proposition~\ref{profixvar} can be extended and Proposition~\ref{prosym} continues to apply.

Let
\[
\overline c(q)\equiv \frac{k+c(q)}{|q|}
\]
denote the average cost of manipulating a single price by an amount $q$, and let $\delta_{\min}$ denote the minimum of $\overline c(q)$ over $q\in[0,\infty)$. Strict convexity of $c$ implies that this minimum is attained at a unique $q$ unless $\overline c$ is everywhere decreasing, in which case I set $\delta_{\min}=\infty$. Similarly, strict concavity of $c$  implies that $\delta_{\min}=\infty.$\footnote{Note that $\bar c'(q)=\frac{c'(q) - \bar c(q)}{q}$, so at any critical point
$q_0$ where $\bar c'(q_0)=0$ one has $c'(q_0)=\bar c(q_0)$ and thus
$\bar c''(q_0)=\frac{c''(q_0)}{q_0}$, which is strictly positive under convexity and
strictly negative under concavity. Hence, under convexity every critical point is a strict
local minimum, whereas under concavity every critical point is a strict local maximum. In
either case, two critical points would require a critical point of the opposite kind in
between, so there is at most one critical point. Under concavity, $\bar c'(q)=(c'(q)q - k - c(q))/q^2$. Strict concavity with $c(0)=0$
gives $c(q)>qc'(q)$ for all $q>0$, so $c'(q)q-c(q)<0$; together with $k\geq 0$ this
yields $\bar c'(q)<0$ for all $q>0$. Hence $\bar c$ is strictly decreasing and
$\delta_{\min}=\infty$. Under convexity, if no critical point exists, an analogous
argument shows that $\bar c$ must be strictly decreasing, and the convention $\delta_{\min}=\infty$ applies again.}

\begin{proposition}
The mean is an optimal benchmark if variable costs are convex and either $\delta_{\min}\leq \Delta$ or $\delta_{\min}=2\Delta$.
\label{prodP}
\end{proposition}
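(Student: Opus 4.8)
The plan is to reduce everything to Proposition \ref{prosym}: it suffices to exhibit, for the mean benchmark, a cost-minimizing manipulation implementing $P$ whose induced final distribution is symmetric around $P$. By the symmetry of the problem I take $P>0$ throughout.

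First I would pin down the minimum cost of raising the mean to $P$. Since a downward move lowers the mean while adding to the cost, a cost-minimizing strategy may be taken to move prices only upward; let $m\in(0,1]$ be the total mass of moved prices. The original mean being zero and the target being $P$, the mean (total) upward displacement is $P$, so the average displacement among moved prices is $P/m$. By Jensen's inequality (strict convexity of $c$) the variable part of the cost is at least $m\,c(P/m)$ and the fixed part is exactly $mk$, whence any implementing strategy costs at least $m\bigl(k+c(P/m)\bigr)=P\,\overline c(P/m)$, with equality only if all moved prices are displaced by the common amount $P/m$. Because $m\le1$ forces $P/m\ge P$, and because $\overline c$ is quasi-convex with minimizer $\delta_{min}$ (the content of ``strict convexity'', as used in the paragraph preceding the statement), the minimum of $\overline c$ over $[P,\infty)$ is attained at $\max\{\delta_{min},P\}$; i.e. it equals $\overline c(P)$ if $\delta_{min}\le P$ and $\overline c(\delta_{min})$ if $\delta_{min}$ is finite and $\ge P$. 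Thus the efficient displacement is $q^*:=\max\{\delta_{min},P\}$ and the minimal implementation cost is $P\,\overline c(q^*)$.

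Then I would produce a symmetry-preserving minimizer in each case. If $\delta_{min}\le|P|$, then $q^*=P$ and $m=1$: moving every price up by $P$ shifts $F$ to $F(\cdot-P)$, which is symmetric around $P$, so Proposition \ref{prosym} applies directly. If $\delta_{min}=2|P|$, then $q^*=\delta_{min}=2P$ and $m=P/(2P)=1/2$; since $F$ places mass exactly $1/2$ on $(-\infty,0)$, I move precisely the prices below $0$ up by $2P$ and leave all others fixed. This is a cost minimizer, and its induced density equals $f(\cdot-2P)$ on $(-\infty,0)$, $f(\cdot-2P)+f(\cdot)$ on $(0,2P)$, and $f(\cdot)$ on $(2P,\infty)$; a direct check using only $f(-x)=f(x)$ shows this density is invariant under $x\mapsto 2P-x$, i.e. symmetric around $P$. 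Proposition \ref{prosym} again yields optimality of the mean at $P$, completing both cases (and, by mirroring, the case $P<0$).

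The step I expect to be the main obstacle is the first one: making rigorous, within the conditional-distribution formalism, both the reduction to upward-only strategies and the Jensen bound with the feasibility constraint $m\le1$, so as to identify the \emph{exact} minimal cost $P\,\overline c(q^*)$ rather than the crude (and, when $\delta_{min}<P$, slack) bound $P\,\overline c(\delta_{min})$. Once $q^*=\max\{\delta_{min},P\}$ is established, the construction of a symmetric minimizer is immediate when $\delta_{min}\le|P|$ and needs only the mass-splitting observation when $\delta_{min}=2|P|$.
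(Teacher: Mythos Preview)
Your proposal is correct and follows essentially the same route as the paper: reduce the mean-manipulation problem via convexity to choosing a common displacement $\delta\ge P$ minimizing $P\,\overline c(\delta)$, read off $\delta^*=\max\{\delta_{min},P\}$, and then exhibit a symmetric implementation to invoke Proposition~\ref{prosym}. The only cosmetic difference is that in the $\delta_{min}=2|P|$ case the paper shifts the upper half of $F$ by $2P$ while you shift the lower half; both yield a final distribution symmetric around $P$, and your explicit density check is a bit more detailed than the paper's one-line justification.
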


In both cases covered by the proposition, the mean admits an optimal manipulation strategy that preserves the symmetry of the price distribution. This property generally fails when either $\delta_{\min}\in(\Delta,2\Delta)$ or $\delta_{\min}>2\Delta$. The intuition is easiest to see in the case in which the unmanipulated price distribution exhibits no dispersion. Strict convexity of the variable cost function implies that the cost-minimizing manipulation of the mean consists of increasing all manipulated prices by the same amount. If $\delta_{\min}\geq \Delta$, the optimal strategy is therefore to raise a mass $\Delta/\delta_{\min}$ of prices by $\delta_{\min}$. Hence, when $\delta_{\min}\in(\Delta,2\Delta)$ the measure of manipulated prices satisfies $\mu\in(1/2,1)$, whereas when $\delta_{\min}>2\Delta$ one has $\mu<1/2$. In both cases, the resulting distribution of manipulated prices has two atoms, one at $P$ and one at $P+\delta_{\min}$, and is therefore not symmetric. The next proposition characterizes the optimal benchmark in the latter case.\footnote{Extending this analysis to more general price distributions is challenging. Even when restricting attention to commonly used benchmarks such as $\tau$-trimmed or $\tau$-Winsorized means, characterizing the optimal manipulation strategy is difficult, as the manipulator's problem becomes a nonseparable and nonconvex functional optimization problem.}

\begin{proposition}
\label{prod2P}
Suppose the distribution of unmanipulated prices has no dispersion. Then the
optimal symmetric benchmark is the $\tau^*$-trimmed mean where
\[
\tau^*\equiv \frac{1}{2}\left(1-\frac{c'(2\Delta)}{\overline c(2\Delta)}\right)\in(0,1/2),
\]
if either:
\begin{itemize}
\item $c$ is strictly convex, $k>0$, and $\delta_{\min}>2\Delta$;
\item $c$ is strictly concave, $k\geq 0$, and $\overline c'(q)/c'(q)$ is strictly increasing.
\end{itemize}
\end{proposition}

When fixed manipulation costs are large relative to the target price increase
($\delta_{\min}>2\Delta$) and variable costs are convex, the optimal manipulation of the
mean concentrates on less than half the prices, breaking the symmetry that makes the mean
optimal when $\delta_{\min}=2\Delta$. A $\tau$-trimmed mean counteracts this incentive by
discarding extreme prices, thereby forcing the manipulator to trade off the gains from
concentrating manipulation against the higher average cost of doing so. Convexity of the
variable cost implies that any optimal manipulation raises all affected prices by the
same amount, and the optimal trimming parameter $\tau^*$ is precisely such that this
amount equals $2\Delta$. At this value, the least-cost manipulation strategy shifts
exactly half of the prices by $2\Delta$, restoring symmetry around $\Delta$ and allowing
Proposition~\ref{prosym} to apply. This logic fails when $\delta_{\min}\in(\Delta,2\Delta)$:
in that case, the optimal manipulation for any trimmed mean raises prices by less than
$2\Delta$, so the mass of manipulated prices is strictly above one half and symmetry
cannot be achieved. Nevertheless, even in this intermediate region the mean dominates all
other trimmed means, consistent with Propositions~\ref{prodP} and~\ref{prod2P} (note that
$\delta_{\min}=2\Delta$ implies $\tau^*=0$):

\begin{proposition}
Suppose $c$ is strictly convex and $\delta_{\min}\in(\Delta,2\Delta)$. Then the untrimmed
mean dominates, at $\Delta$, any $\tau$-trimmed mean when the distribution of
unmanipulated prices has no dispersion.
\label{pronodisp1}
\end{proposition}

A similar conclusion holds, perhaps surprisingly, when variable costs are concave. In
this case the optimal manipulation again raises all manipulated prices by a common
amount. The reason is that concavity of $c$ makes it suboptimal to place any manipulated
mass strictly between zero and the upper trimming point: the manipulator can lower cost through a mean-preserving spread that
shifts this mass to the two extremes, zero and the trimming point, leaving the trimmed
mean unchanged. Once the manipulated mass is concentrated at the trimming point, the
optimal trimming parameter $\tau^*$ is the one for which the manipulator finds it optimal
to locate that point at $2\Delta$, which in turn makes the manipulated distribution
symmetric around $\Delta$. The condition that $\overline c'(q)/c'(q)$ is strictly
increasing guarantees that the manipulator's objective function has a unique interior
minimum, so that the first-order condition fully characterises the global optimum.
Economically, it requires that either fixed costs are sufficiently large or that the marginal cost of manipulation diminishes sufficiently fast as the size of distortion grows. This is a mild restriction: I verify in Appendix~\ref{appCPMM} that it is satisfied by the cost function generated by a CPMM, which is the empirically relevant case in DeFi, even when there are no fixed costs. A simple parametric family that also satisfies the condition is the class of power costs $c(q)=q^\alpha$ with $\alpha\in(0,1)$. For this family, one can compute directly that
\[
\frac{\overline c'(q)}{c'(q)} = \frac{(\alpha-1)q^\alpha - k}{\alpha\, q^{\alpha+1}},
\]
whose derivative with respect to $q$ equals $\bigl[(1-\alpha)q^\alpha + k(1+\alpha)\bigr]/(\alpha q^{\alpha+2}) > 0$ for all $q>0$ and $k\geq 0$, confirming that the condition holds throughout this class.

\begin{remark}
\label{remcompstat}
Two comparative statics of the optimal trimming $\tau^*$ in Proposition~\ref{prod2P} are worth noting. First, $\tau^*$ rises with the fixed cost of manipulation, approaching the median as fixed costs grow large relative to variable costs. Second, when the variable cost arises from trading against a constant-product market maker, $\tau^*$ falls with the liquidity of the pool: prices from deeper pools, where manipulation is more costly at the margin, are aggregated with less trimming, hence closer to the mean, whereas prices from shallower pools call for more trimming and, in the limit, the median. The first claim is direct while the second is formalised in Appendix~\ref{appCPMM}.
\end{remark}

Although the focus on the case of no dispersion is driven by tractability, the prevalence
of extensive arbitrage facilitated by bots, as discussed in \cite{daia20flash}, implies
that prices in practice often cluster tightly around their central value, making this a
reasonable assumption in the context of DeFi. Moreover, by the same continuity argument as for Proposition~\ref{profixvar}, the ranking established in Proposition~\ref{prod2P} is robust to small amounts of dispersion: the $\tau^*$-trimmed mean remains approximately optimal when the unmanipulated prices are concentrated within an $\epsilon$ of $P$, with an approximation error that vanishes as $\epsilon$ goes to zero.

%%%%%%%%%%%%%%%%%%%%%%%%%%%%%%%%%%%%%%%%%%%%%%%%%%%%%%

\section{Benchmarks Based on Data from a Heterogeneous Market}
\label{secHet}

In this section, I consider the case in which the price distribution is generated by
different subpopulations that may differ in their manipulation costs. Such heterogeneity
may arise, for instance, when the benchmark is computed using prices observed at different
points in time and manipulation costs vary over time. A natural example is time-varying
liquidity: when liquidity is lower, a given price distortion can be achieved with smaller
orders and therefore lower variable costs. Another example, in the context of blockchains,
is time variation in the costs of transaction inclusion and ordering like gas fees and the
price in MEV auctions; see Footnote~\ref{foogas}.

I also allow for limited heterogeneity across subpopulations in the distribution of unmanipulated prices. Formally, I assume a countable set of
subpopulations $I$, where subpopulation $i\in I$ has mass $\mu_i$\footnote{$\mu_i$ represents the frequency (total mass) of prices drawn from subpopulation $i$. These masses need not sum to one; they appear in the analysis only through the normalized per-price weights in Proposition~\ref{proweighted}, which are constructed to integrate to one across all prices. Interpreting $\mu_i$ as the fraction of prices from subpopulation $i$ (so that $\sum_{i\in I}\mu_i=1$) is the leading case and involves no loss of generality.} and is described by a
probability distribution $F_i$. As before, I consider two cases: the case in which each $F_i$ has a density $f_i$ that is single-peaked and symmetric around
a common point $P$, and the case with no dispersion in which each $F_i$ is degenerate at a common point $P$.

The manipulator's strategy is described by a family, one per subpopulation, of conditional
distribution functions $\{G^c_i(\cdot|p)\}_{i\in I}$, that describe how each price in each
subpopulation is shifted by the manipulator. The manipulator's strategy gives rise to a
final distribution of prices for each subpopulation $i$:
\[G_i( p)\equiv \int G^c_i(p| \tilde p)\,  \mu_i\,  F_i\left(  d \tilde p \right).\]

The cost of a manipulating strategy $\{G^c_i(\cdot|\cdot)\}_{i\in I}$ is equal to:
\[\sum_{i\in I} \int \int_{{ p\not =\tilde p}} \left(  c_i( p-\tilde p) +k_i \right)
\,G^c_i(d  p| \tilde p)\,  \mu_i\,  F_i\left( d \tilde p\right),\]
where the $c_i$'s are continuous, strictly increasing functions (except when $c_i(\cdot)=0$),
equal to zero at zero and symmetric around zero, and $k_i\geq 0$. Thus, the $c_i$'s are
the {\it variable costs} of manipulation and the $k_i$'s the {\it fixed costs} of
manipulation.

A benchmark in this section is again a unidimensional statistic that summarizes the
distribution of prices. The novelty in this section is that the benchmark is allowed to
depend on the whole vector of distributions of final prices of all subpopulations,
$(G_i(p))_{i\in I}$, rather than on the population distribution $G$. One such example is a
weighted mean of the medians of each of the subpopulations. More generally, a weighted benchmark can be defined using some positive weights $\{w_i\}_{i\in I}$. These weights aggregate the subpopulation distributions into a single pooled probability distribution
\[
\bar G^{(w)}\;\equiv\;\frac{1}{\sum_{j\in I}w_j\mu_j}\sum_{i\in I}w_i\mu_i\,G_i,
\]
and standard statistics of $\bar G^{(w)}$ (e.g. mean, median, or $\tau$-trimmed mean) yield well-defined benchmarks. 

I call
{\it symmetric${}^*$ benchmarks} those that are equal to the point of symmetry whenever
all subpopulations are symmetric around the same point. This is clearly a larger family of
benchmarks than the symmetric benchmarks. For instance, the mean of medians is not symmetric but it is symmetric${}^*$. Still, for the distributions of the
subpopulations of unmanipulated prices any symmetric${}^*$ benchmark is equal to $P$.
Thus, the bias created by manipulation can be measured as deviations from $P$.

\begin{definition}
A benchmark is optimal$^*$ at $\Delta$ if it is the symmetric$^*$ benchmark that
maximizes the minimum cost of inducing bias $\Delta$. A benchmark is
optimal$^*$ if it is optimal$^*$ for every $\Delta$.
\end{definition}

The first main result of this section characterizes the optimal benchmark in two polar cases:
when manipulation involves no fixed costs and variable costs are convex, and when it involves no variable costs. 

\begin{proposition}
The mean of $\bar G^{(w)}$ with weights $w_i=c_i'(\Delta)$ is an optimal${}^*$ benchmark in the absence of fixed costs, i.e.\ $k_i=0$ for all $i\in I$, provided variable costs are convex.\footnote{As is immediate, the proposed benchmark is independent of $\Delta$ whenever the variable cost functions satisfy $c_i(x)=\beta_i c(x)$ for some common function $c$.} The median of $\bar G^{(w)}$ with weights $w_i=k_i$ is an optimal${}^*$ benchmark in the absence of variable costs, i.e.\ $c_i(\cdot)=0$ for all $i\in I$.
\label{proweighted}
\end{proposition}

When manipulation entails only convex variable costs, the marginal cost of distorting prices
differs across subpopulations, and the manipulator optimally reallocates manipulation
toward those for which distortions are cheapest at the margin. An optimal benchmark
therefore assigns lower weight to prices that are easier to manipulate at the margin and
higher weight to prices that are more costly to distort, which leads to a weighted mean
with weights proportional to marginal manipulation costs. In contrast, when manipulation
entails only fixed costs, the relevant trade-off is not how much to manipulate a given
price but whether to manipulate it at all. In this case, robustness requires limiting the
influence of prices that are cheap to include in a manipulation. Accordingly, the optimal
weights are proportional to the fixed costs of manipulation.

Proposition~\ref{proweighted} characterises the optimal${}^*$ benchmark in the two polar cases. When both fixed and variable costs are present, a further characterisation is available in the \emph{proportional sub-case} in which the ratio of fixed to variable cost is the same across all subpopulations.

\begin{proposition}
\label{prophybrid}
Suppose the distribution of unmanipulated prices has no dispersion, variable and fixed costs are proportional, i.e. $c_i=\alpha_i c$ and $k_i=\alpha_i\, \kappa$ for $c$ strictly increasing, symmetric with $c(0)=0$ and $\kappa\geq 0$. Let $\overline{c}_\kappa(q)\equiv(c(q)+\kappa)/q$ and $\delta_{\kappa,\min}$ its minimum. Then the optimal${}^*$ benchmark is the $\tau^*$-trimmed mean of $\bar G^{(w)}$ with $w_i=\alpha_i$,
where
\[
\tau^*\;=\;\frac{1}{2}\!\left(1-\frac{c'(2\Delta)}{\overline{c}_\kappa(2\Delta)}\right),
\]
if either:
\begin{itemize}
\item $c$ is strictly convex, $\kappa>0$, and $\delta_{\kappa,\min}>2\Delta$;
\item $c$ is strictly concave, $\kappa\geq 0$, and $\overline c_\kappa'(q)/c'(q)$ is strictly increasing.
\end{itemize}
\end{proposition}

The key to the proof is that the proportionality condition $c_i=\alpha_i\,c$ and $k_i=\kappa\alpha_i$ makes the manipulation cost depend on the strategy only through $\bar G^{(\alpha)}$: $A\int_{\tilde p\neq P}[c(\tilde p-P)+\kappa]\,d\bar G^{(\alpha)}(\tilde p)$. Since the proposed benchmark is the $\tau^*$-trimmed mean of $\bar G^{(\alpha)}$, there exists an optimal manipulative strategy that just shifts half the mass an amount $2\Delta$ by application of the results of the homogeneous case of Proposition~\ref{prod2P}. This optimal symmetric manipulation can be implemented by shifting half the mass of each subpopulation $2\Delta$, which is a symmetric${}^*$ strategy, and thus implies the result by a generalisation of Proposition \ref{prosym}.

When the proportionality condition fails, the analysis becomes substantially harder. The manipulation cost decomposes into two separately weighted aggregations of the per-subpopulation strategy: one weighted by marginal costs and one weighted by fixed costs. When these weighting schemes disagree, a single benchmark cannot simultaneously induce optimality under both, and no choice of $\tau$ resolves the tension. Characterising the optimal${}^*$ benchmark in the general intermediate case is left for future work.

\section{Conclusion}
\label{secConclusion}

This paper links the optimal benchmark formula to the cost of its manipulation. When only variable costs matter, the mean maximises resistance to manipulation; when only fixed costs matter, the median does so; and when both are present, a trimmed mean interpolates between the two, with the degree of trimming increasing in the relative importance of fixed costs. This correspondence provides an explanation for an empirical pattern: benchmarks in traditional finance, where manipulation requires sustained trading, tend to be means, while those in DeFi, where flash loans can eliminate variable costs but securing transaction ordering is costly, tend to be medians.

A few simplifying assumptions bound the scope of the analysis. My characterisations of optimal benchmarks rely on the assumption that the distribution of unmanipulated prices is symmetric or has no dispersion. This allows for clear cut results in a tractable setting. Extending the analysis to either asymmetric distribution of prices or dispersed price distributions is challenging, as the manipulator's problem becomes a non-separable functional optimisation problem. In the heterogeneous setting, Proposition~\ref{prophybrid} requires that fixed and variable costs are proportional across subpopulations: when fixed and variable costs vary independently the optimal benchmark remains uncharacterised, because the manipulation cost decomposes into two separately weighted aggregations that no single benchmark can reconcile.

Several directions for future research stand out. One speaks directly to the limitations just discussed. Restricting attention to linear variable costs, at some cost in empirical grounding, would recast the benchmark design problem as a maximin variant of the optimal partial transport problem. This would not by itself restore separability, but it would bring the problem within reach of duality-based methods, which may permit progress on asymmetric price distributions.

Two further directions broaden the framework rather than relax its assumptions. The first concerns the sampling of prices: this paper takes the set of sampled prices as given and optimises only the aggregation formula, whereas jointly optimising the sampling protocol and the formula (for example, choosing which transaction times or which liquidity pools to include) could yield additional gains in manipulation resistance, especially in decentralised exchanges where liquidity varies sharply across time and venue. The second concerns statistical efficiency: this paper isolates the manipulation-resistance margin and takes the point of symmetry as the estimation target, setting aside the precision of the benchmark when prices are noisy but unmanipulated. Because the optimal benchmarks belong to the L-estimator family, a natural next step is to choose the trimming fraction and weights so as to balance manipulation resistance against asymptotic efficiency. This would trace out the frontier between efficiency and robustness studied in robust statistics, see \cite{huber}, connecting the present analysis to the trade-off between efficiency and manipulation emphasised by \cite{duffie2021robust}.

\appendix

\section{Proofs}
\label{appProofs}

\subsection{Proof of Proposition \ref{prosym}}

\begin{proof}
Consider a benchmark for which there exists such cost-minimizing manipulation strategy. Since the strategy induces a symmetric distribution of manipulated prices, it should implement $\Delta$ not only for this benchmark, but for any symmetric benchmark. Hence, the minimum cost required to implement $\Delta$ under any symmetric benchmark is weakly bounded above by the cost of this strategy, which implies the result.
\end{proof}

\subsection{Proof of Proposition \ref{profixvar}}

\begin{proof}
Begin with the case $k=0$ and fix an arbitrary bias $\Delta>0$ (recall that the analysis for $\Delta<0$ is symmetric). In the absence of fixed costs and given the convexity of $c$, the cost-minimizing way to raise the mean in $\Delta$ is to increase all prices by the same amount $\Delta$. This uniform shift raises the entire distribution by $\Delta$ and therefore preserves symmetry. The first result then follows directly from Proposition~\ref{prosym}.

Next consider the case $c(\cdot)=0$ and an arbitrary bias $\Delta>0$. Let $P$ denote the point of symmetry of the distribution of unmanipulated prices.
For any $y\in \real$, define,
\[
\tilde g(y)\equiv \frac{f(y)+f(y-2\Delta)}{2}, \]
and for any $y<P+\Delta$,
\[\alpha(y)\equiv \frac{\tilde g(y)}{f(y)},
\qquad
A(y)\equiv
\frac{\int_{P+\Delta}^\infty \frac{f(z-2\Delta)-f(z)}{2}\,dz}{1-\alpha(y)}.
\]
The symmetry of $f$ around $P$ and its strict single-peakedness imply that $f(y)>f(y-2\Delta)$ for all $y<P+\Delta$: since $y-2\Delta$ is further from $P$ than $y$ whenever $y<P+\Delta$ and $\Delta>0$, single-peakedness gives $f(y-2\Delta)<f(y)$, so $\alpha(y)\in(0,1)$.

The manipulation strategy is defined as follows. For any $y\geq P+\Delta$, the conditional distribution $G^c(\cdot\mid y)$ is degenerate at $y$, i.e. there is no manipulation. For any $y<P+\Delta$, $G^c(\cdot\mid y)$ has a mass point $\alpha(y)$ at $y$ and density:
\[
\frac{\tilde g(x)-f(x)}{A(y)}
\quad\text{for  } x\in [P+\Delta,\infty).
\]

The total mass shifted by this strategy is:
\begin{eqnarray*}
\int_{-\infty}^{P+\Delta} \int_{P+\Delta}^{\infty} \frac{\tilde g(x)-f(x)}{A(y)} \,dx\, f(y)\,dy
&=& \int_{-\infty}^{P+\Delta} (1-\alpha(y))
\frac{\int_{P+\Delta}^{\infty} \left(\tilde g(x)-f(x)\right)\,dx}
{\int_{P+\Delta}^\infty \frac{f(z-2\Delta)-f(z)}{2}\,dz}
f(y)\,dy \\
&=& \int_{-\infty}^{P+\Delta} (1-\alpha(y)) f(y)\,dy \\
&=& \int_{-\infty}^{P+\Delta} \frac{f(y)-f(y-2\Delta)}{2}\,dy \\
&=& \frac{F(P+\Delta)-F(P-\Delta)}{2}
= F(P+\Delta)-\frac{1}{2},
\end{eqnarray*}
where the equalities use, respectively, the definition of $A(y)$, the definition of $\tilde g$, the definitions of $\alpha(y)$ and $\tilde g(y)$, and finally the symmetry of $f$ around $P$.

To conclude, it remains to verify that the induced distribution of manipulated prices is symmetric around $P+\Delta$. First, we show that this density is $\tilde g$. For $x<P+\Delta$, the density equals $\alpha(x)f(x)=\tilde g(x)$. For $x\geq P+\Delta$, it is given by
\begin{eqnarray*}
f(x)+ \int_{-\infty}^{P+\Delta} \frac{\tilde g(x)-f(x)}{A(y)} f(y)\,dy
&=& f(x)+\bigl(\tilde g(x)-f(x)\bigr)
\frac{\int_{-\infty}^{P+\Delta} (1-\alpha(y))f(y)\,dy}
{\int_{P+\Delta}^\infty \frac{f(z-2\Delta)-f(z)}{2}\,dz} \\
&=& f(x)+\bigl(\tilde g(x)-f(x)\bigr)
\frac{\int_{-\infty}^{P+\Delta} \left( f(y) - f(y-2\Delta) \right)  \,dy}
{\int_{P+\Delta}^\infty \left(  f(z-2\Delta)-f(z) \right)  \,dz} \\
&=& \tilde g(x),
\end{eqnarray*}
where the derivation again uses the definitions of $A(y)$, $\alpha(y)$, and $\tilde g$, together with the identity:
\begin{equation*}
\frac{\int_{-\infty}^{P+\Delta} \left( f(y) - f(y-2\Delta) \right)  \,dy}
{\int_{P+\Delta}^\infty \left(  f(z-2\Delta)-f(z) \right)  \,dz} =
\frac{\int_{\Delta}^{\infty} \left( f(P+2\Delta-t) - f(P-t) \right)  \,dt}
{\int_{\Delta}^\infty \left(  f(P+s-2\Delta)-f(P+s) \right)  \,ds} = 1,
\end{equation*}
where the first step follows from the change of variables $t=P+2\Delta-y$ and $s=z-P$, and the second from the symmetry of $f(P+x)=f(P-x)$ by symmetry around $P$.

Finally, we show that $\tilde g$ is symmetric around $P+\Delta$:
\begin{eqnarray*}
\tilde g(P+\Delta-x)&=&\frac{f(P+\Delta-x)+f(P-\Delta-x)}{2}\\
&=&\frac{f(P-\Delta+x)+f(P+\Delta+x)}{2}\\
&=&\tilde g(P+\Delta+x),
\end{eqnarray*}
where the first step and last step follows from the definition of $\tilde g$, the second from the symmetry of $f$ around $P$.

This completes the proof.
\end{proof}

\subsection{Proof of Proposition \ref{prodP}}

\begin{proof}
The strict convexity of the variable cost function implies that any cost-minimizing manipulation of the mean must increase all manipulated prices by the same amount. Let $\delta$ denote this increase and let $\mu$ denote the measure of manipulated prices. The optimal manipulation then solves
\[
\left\{
\begin{array}{cl}
\min\limits_{\delta\in\mathbb{R}_+,\;\mu\in[0,1]} & \mu\bigl(k+c(\delta)\bigr) \\
\text{s.t.} & \delta\,\mu = \Delta .
\end{array}
\right.
\]
Using the constraint to eliminate $\mu$, this problem can be written as
\[
\min_{\delta\in[\Delta,\infty)} \Delta\,\overline c(\delta).
\]

If $\delta_{\min}\leq \Delta$, the solution is $\delta=\Delta$ and $\mu=1$. If instead $\delta_{\min}=2\Delta$, the solution is $\delta=\delta_{\min}$ and $\mu=1/2$. In both cases, the corresponding manipulation strategy can induce a final distribution of prices that is symmetric around $\Delta$. In the former case, symmetry follows because the entire distribution is shifted by the same amount and the original distribution is symmetric. In the latter case, symmetry follows because it suffices to shift exactly the upper half of the price distribution. Proposition~\ref{prosym} therefore applies, which establishes the result.
\end{proof}

\subsection{Proof of Proposition \ref{prod2P}}

\begin{proof}
As a preliminary step we argue that $\tau^*\in (0,1/2)$. That $\tau^*<1/2$ follows immediately from $c'(2\Delta)>0$. That $\tau^*>0$, i.e.\
$\overline c(2\Delta)>c'(2\Delta)$, is verified separately in each case below. We also argue in each of the two cases that any cost-minimizing
manipulation takes the form of increasing a mass $\mu$ of prices by a common amount
$\delta$. Raising the benchmark from $P$ to $P+\Delta$ then requires
$\delta(\mu-\tau^*)/(1-2\tau^*)=\Delta$ with $\mu\in(\tau^*,1-\tau^*]$. Eliminating
$\mu=\tau^*+(1-2\tau^*)\Delta/\delta$, the feasibility condition $\mu\leq 1-\tau^*$ reduces to
$\delta\geq\Delta$, and the manipulation problem becomes
\begin{equation}
\label{eqMIN}
\min_{\delta\geq\Delta}
\left\{
\Delta(1-2\tau^*)\,\overline c(\delta)+\tau^*\bigl(k+c(\delta)\bigr)
\right\},
\end{equation}
whose derivative with respect to $\delta$ is
\begin{equation}
\label{eqFOC}
c'(\delta)\,\Delta(1-2\tau^*)
\left(\frac{\overline c'(\delta)}{c'(\delta)}
+\frac{\tau^*}{\Delta(1-2\tau^*)}\right).
\end{equation}
The definition of $\tau^*$ together with $\overline
c'(\delta)=(c'(\delta)-\overline c(\delta))/\delta$ means that $\delta=2\Delta$ makes this derivative equal to zero. We show in the two cases below, that $\delta=2\Delta$ is not only a local but a global minimizer. So the optimally manipulated distribution places mass $1/2$ at $P$ and
mass $1/2$ at $P+2\Delta$. This distribution is symmetric around $P+\Delta$, so the optimality of the $\tau^*$-trimmed mean follows from Proposition~\ref{prosym}.

\medskip
\noindent\textit{Case 1: $c$ strictly convex.} Strict convexity implies that a
cost-minimizing manipulation sets a common $\delta$ across all manipulated prices.
Since $\delta_{\min}>2\Delta$, $\overline c(\delta)>c'(\delta)$ for all
$\delta\in(0,\delta_{\min})$, so in particular $\overline c(2\Delta)>c'(2\Delta)$ and
$\tau^*>0$. For the global minimum: strict convexity implies $\overline c(\delta)-c'(\delta)$ is
positive and strictly decreasing on $(0,\delta_{\min})$, so \eqref{eqFOC} is strictly
negative for $\delta\in(0,2\Delta)$ and strictly positive for
$\delta\in(2\Delta,\delta_{\min})$; for $\delta\geq\delta_{\min}$, $\overline
c(\delta)\leq c'(\delta)$ so \eqref{eqFOC} is strictly positive. Hence the objective
in \eqref{eqMIN} is strictly decreasing then strictly increasing, and $\delta=2\Delta$
is its unique global minimiser.

\medskip
\noindent\textit{Case 2: $c$ strictly concave.} For any given manipulation, let $p^*$
be the $(1-\tau^*)$-quantile of the final price distribution. It is never optimal to place
manipulated mass above $p^*$, as doing so is costly without raising the $\tau^*$-trimmed
mean. For the untrimmed mass in $(0,p^*)$, concentrating it onto $\{0,p^*\}$ while
preserving its conditional mean is a mean-preserving spread, which strictly decreases
expected variable cost when $c$ is concave and also reduces expected fixed cost, while
leaving the $\tau^*$-trimmed mean unchanged. Hence a common $\delta$ is again optimal and
the problem reduces to \eqref{eqMIN}. Strict concavity of $c$ with $c(0)=0$ gives
$c(2\Delta)>2\Delta c'(2\Delta)$, and therefore $\overline{c}(2\Delta)=(k+c(2\Delta))/(2\Delta)>c'(2\Delta)$
for all $k\geq 0$, so $\tau^*>0$. For the global
minimum: since $c'(\delta)>0$ and $\Delta(1-2\tau^*)>0$, the sign of \eqref{eqFOC} is
determined by the bracket, which is strictly increasing in $\delta$ by assumption, equals
zero at $\delta=2\Delta$, and therefore is strictly negative for $\delta<2\Delta$ and
strictly positive for $\delta>2\Delta$. Hence the objective in \eqref{eqMIN} is strictly
decreasing then strictly increasing, and $\delta=2\Delta$ is its unique global minimiser.
\end{proof}

\subsection{Proof of Proposition \ref{pronodisp1}}

\begin{proof}
The argument follows the proof of Proposition~\ref{prod2P}. Strict convexity of $c$
implies that, for a $\tau$-trimmed mean, the cost-minimizing manipulation consists of
increasing a measure of prices
\[
\mu=\tau+(1-2\tau)\frac{\Delta}{\delta^*}
\]
by a common amount $\delta^*\in[\Delta,\delta_{\min}]$, where $\delta^*$ solves
\eqref{eqMIN}. The value of the objective in \eqref{eqMIN}, evaluated at $\delta^*$,
equals the minimum cost of manipulating the $\tau$-trimmed mean.

It is immediate from \eqref{eqMIN} that this cost is strictly decreasing in $\tau$ in the
case of the corner solution $\delta^*=\Delta$. If instead $\delta^*>\Delta$, then
necessarily $\delta^*\leq\delta_{\min}$: for $\delta>\delta_{\min}$, $\overline c'(\delta)>0$
so the bracket in \eqref{eqFOC} is strictly positive and the objective in \eqref{eqMIN} is
strictly increasing, ruling out an interior minimum above $\delta_{\min}$. In this case the
envelope theorem implies that the derivative of the minimum manipulation cost with
respect to $\tau$ is
\[
-\frac{k+c(\delta^*)}{\delta^*}\,(2\Delta-\delta^*)\;<\;0,
\]
where the inequality follows from $\delta^*\leq\delta_{\min}<2\Delta$. Hence, in all
cases, increasing $\tau$ strictly reduces the minimum cost of manipulation. It follows
that the untrimmed mean ($\tau=0$) maximizes the cost of manipulation at $\Delta$ and
therefore dominates any trimmed mean.
\end{proof}

\subsection{Proof of Proposition \ref{proweighted}}

\begin{proof}
Consider, first, the case of strictly convex variable costs and no fixed costs.
The argument characterizes the optimal manipulation strategy for the proposed benchmark. Since
manipulation costs are strictly convex, it is optimal to increase all prices within the
same subpopulation by the same amount. Let $\delta_i$ denote the increase applied to
prices in subpopulation $i$. The optimal vector $\{\delta_i\}_{i\in I}$ solves
\begin{eqnarray*}
\min_{\{\delta_i\}_{i\in I}}
& \sum_{i\in I} c_i(\delta_i)\,\mu_i \\
\text{s.t.}
& \sum_{i\in I}
\frac{c_i'(\Delta)}{\sum_{j\in I} c_j'(\Delta)\mu_j}\,
\mu_i\,\delta_i
= \Delta.
\end{eqnarray*}

This problem has a convex objective function with a linear constraint, and thus it is sufficient to check the first-order conditions, which are readily verified to hold at $\delta_i=\Delta$ for all $i\in I$. Hence, the optimal manipulation
consists of shifting all prices by the same amount $\Delta$. Because each subpopulation of
unmanipulated prices is symmetric around $P$, the corresponding distributions of
manipulated prices are symmetric around $P+\Delta$. Consequently, this manipulation
implements the value $\Delta$ for any symmetric${}^*$ benchmark, which implies that the
proposed benchmark is optimal${}^*$ at $\Delta$.

Consider next the case of no variable costs. Let
\[
F(P+\Delta)\equiv\sum_{i\in I}\frac{k_i}{\sum_{j\in I}k_j\,\mu_j}\,\mu_i\,F_i(P+\Delta)
\]
denote the weighted fraction of unmanipulated prices lying below $P+\Delta$ under the
proposed benchmark. Since each $F_i$ is symmetric around $P$, $F_i(P+\Delta)>\tfrac{1}{2}$
for all $\Delta>0$, so $F(P+\Delta)>\tfrac{1}{2}$. The least costly way to raise the proposed benchmark to $P+\Delta$ requires shifting a
weighted mass of prices from strictly below $P+\Delta$ to weakly above $P+\Delta$ until the
cumulative weighted mass weakly below $P+\Delta$ is exactly $1/2$. That is, the mass of prices $m_i$ in
each subpopulation $i$ that is shifted from strictly below $P+\Delta$ to weakly above
$P+\Delta$ must satisfy:
\[
\sum_{i\in I} \frac{k_i}{\sum_{j\in I} k_j\,\mu_j}\, m_i
= F(P+\Delta)-\frac{1}{2}.
\]
Hence, the cost of any such manipulation strategy is equal to:
\[
\sum_{i\in I} k_i\, m_i= \left(F(P+\Delta)-\frac{1}{2}\right) \sum_{i\in I} k_i\,\mu_i,
\]
and therefore does not depend on how the total weighted mass $F(P+\Delta)-1/2$ is allocated
across subpopulations.

One feasible manipulation satisfying this condition is obtained by setting:
\[
m_i=\mu_i\left(F_i(P+\Delta)-\frac{1}{2}\right)
\]
and applying, within each subpopulation, an adaptation of the optimal manipulation
strategy used in the proof of Proposition~\ref{profixvar}. In particular, for any $i\in I$
and $y\in\real$, define:
\[
\tilde g_i(y)\equiv \frac{f_i(y)+f_i(y-2\Delta)}{2},
\]
and for any $y<P+\Delta$,
\[
\alpha_i(y)\equiv \frac{\tilde g_i(y)}{f_i(y)},
\qquad
A_i(y)\equiv
\frac{\int_{P+\Delta}^\infty \frac{f_i(z-2\Delta)-f_i(z)}{2}\,dz}{1-\alpha_i(y)}.
\]

The manipulation strategy for subpopulation $i$ is defined as follows. For any
$y\geq P+ \Delta$, the conditional distribution $G_i^c(\cdot\mid y)$ is degenerate at $y$,
i.e.\ prices are left unchanged. For any $y<P+\Delta$, $G_i^c(\cdot\mid y)$ assigns
probability $\alpha_i(y)$ to $y$ and has density
\[
\frac{\tilde g_i(x)-f_i(x)}{A_i(y)}
\quad\text{for } x\in[P+\Delta,\infty).
\]

The same arguments as in Proposition~\ref{profixvar} imply that the total mass shifted in
subpopulation $i$ from strictly below $P+\Delta$ to weakly above $P+\Delta$ is
$\mu_i\left(F_i(P+\Delta)-\frac{1}{2}\right)$, and that the induced distribution of
manipulated prices within each subpopulation is symmetric around $P+\Delta$. Consequently,
this manipulation implements $\Delta$ for any symmetric${}^*$ benchmark, which
implies that the proposed benchmark is optimal${}^*$ by an adaptation of the arguments in Proposition \ref{prosym}.
\end{proof}

\subsection{Proof of Proposition \ref{prophybrid}}

\begin{proof} Since the distribution of unmanipulated prices is degenerate at $P$, any manipulation strategy has total cost:
\begin{align*}
C &=  \sum_{i\in I}\mu_i\!\int_{\tilde p\neq P}\!\bigl[\alpha_i\,c(\tilde p-P)+\alpha_i \kappa\bigr]\,dG_i(\tilde p)\\
&= \left( \sum_j\alpha_j\mu_j \right) \!\int_{\tilde p\neq P}\!\bigl[c(\tilde p-P)+\kappa\bigr]\,d\bar G^{(\alpha)}(\tilde p),
\end{align*}
The cost depends on the strategy only through $\bar G^{(\alpha)}$. The proposed benchmark is the $\tau$-trimmed mean of $\bar G^{(\alpha)}$ by definition, so the benchmark value also depends on the strategy only through $\bar G^{(\alpha)}$.

The problem therefore reduces to: choose $\tau$ to maximise the minimum cost of making the $\tau$-trimmed mean of $\bar G^{(\alpha)}$ equal to $P+\Delta$, where the cost per unit mass of $\bar G^{(\alpha)}$ displaced from $P$ is $\left( \sum_j\alpha_j\mu_j \right) \left( c(\tilde p-P)+\kappa\right)$. Every symmetric distribution of $\bar G^{(\alpha)}$ around $P+\Delta$ is achievable by setting each $G_i^c(\cdot|P)$ equal to that distribution, so this is identical to the homogeneous problem of Proposition~\ref{prod2P} with variable cost $c$ and fixed cost $\kappa$. Under the stated regularity condition, that proposition yields $\tau^*=\frac{1}{2}(1-c'(2\Delta)/\overline{c}_\kappa(2\Delta))$ as desired.
\end{proof}

\section{The Manipulation Cost of the CPMM}
\label{appCPMM}

A \emph{constant-product market maker} (CPMM) is an automated liquidity pool that holds reserves of two assets, say $X$ and $Y$. Any trader may swap one asset for the other: given a deposit of $\Delta y$ units of $Y$, the pool releases however much $X$ is needed to keep the product of the two reserve quantities constant. A proportional fee rate $f\in(0,1)$ is applied to every trade, so that only the \emph{effective input} $(1-f)\Delta y$ counts toward the constant-product invariant that determines how much $X$ is released; the remaining fraction $f\Delta y$ stays in the pool as revenue for the liquidity providers. The fee makes the round-trip cost of a manipulation that pushes the price upward and then unwinds it strictly positive, even when the price is fully restored to its original level. This appendix computes this round-trip cost as a function of the price distortion $q$ and verifies that it is strictly concave and satisfies the regularity condition required by Proposition~\ref{prod2P}.

Consider a CPMM pool, where a constant-product market maker maintains reserves $(x,y)$ of assets $X$ and $Y$ satisfying the invariant $xy=\kappa$ for some $\kappa>0$. The marginal price of $X$ in terms of $Y$ is given by the ratio of reserves $y/x$, which follows directly from differentiating the invariant. With initial reserves $(x_0,y_0)$ and fee rate $f\in(0,1)$, let $P_0 \equiv y_0/x_0$ denote the initial marginal price. A manipulator submits a {\it push} order that buys $X$ to raise the marginal price to $P_0+q$ where $q>0$, followed by an {\it unwind} order that sells $X$ to restore the price to $P_0$.\footnote{One
could consider two alternative unwinding rules for manipulation. The first is that the
manipulator sells back exactly the quantity of $X$ received from the push order. This
leaves a residual price $P_2=P_0(1+v)/[1+(1-f)^2v]>P_0$, so the manipulation is only
partially reversed. The second is that the unwinding manipulation minimises the cost of
the manipulation valued at $P_0$. Again, this yields a final price $P_2>P_0$ and thus
there is no price restoration. Either alternative delivers similar outcomes but requires a slightly more convoluted analysis.}

\paragraph{Push.} After the submission of an order that sends $\Delta^y_{in}$ units of $Y$
to the AMM, the CPMM algorithm updates the reserves to:
\[
(x_1,y_1)\equiv\left(\frac{x_0}{1+(1-f)\delta_y},\,y_0(1+\delta_y)\right),
\qquad \delta_y\equiv\frac{\Delta^y_{in}}{y_0}.
\]
The amount of $X$ obtained is $\Delta^x_{\mathrm{out}}=\frac{(1-f)\delta_y}{1+(1-f)\delta_y}x_0$
and the new marginal price is $P_0(1+\delta_y)(1+(1-f)\delta_y)$. Setting this equal to
$P_0+q$ requires $\delta_y=\delta(q)$, the positive root of:
\begin{equation}\label{eq:push}
(1+\delta)(1+(1-f)\delta)=1+\frac{q}{P_0}.
\end{equation}

\paragraph{Unwind.} The unwind order sends $\Delta^x_{in}$ units of $X$ to the AMM. Setting
$\delta_x\equiv\Delta^x_{in}/x_1$, the CPMM algorithm gives post-unwind reserves:
\[
(x_2,y_2)\equiv\left(x_1(1+\delta_x),\,\frac{y_1}{1+(1-f)\delta_x}\right),
\]
and a new marginal price of $P_0(1+\delta_y)(1+(1-f)\delta_y)/[(1+\delta_x)(1+(1-f)\delta_x)]$.
Setting this equal to $P_0$ gives $(1+\delta_x)(1+(1-f)\delta_x)=1+q/P_0$, so
$\delta_x=\delta(q)$ and
\[
\Delta^x_{in}=\frac{\delta(q)}{1+(1-f)\delta(q)}x_0,
\qquad
\Delta^y_{\mathrm{out}}=\frac{(1-f)\delta(q)}{1+(1-f)\delta(q)}(1+\delta(q))y_0.
\]

\paragraph{Cost.} The net inflows (to the AMM) across both trades of $Y$ is:
\[
\Delta^y_{in}-\Delta^y_{\mathrm{out}}=\frac{f\,\delta(q)}{1+(1-f)\delta(q)}y_0,
\]
and of $X$ inflow:
\[
\Delta^x_{in}-\Delta^x_{\mathrm{out}}=\frac{f\,\delta(q)}{1+(1-f)\delta(q)}x_0.
\]
Since a net inflow to the AMM is a cost to the trader, the total cost for the manipulator in units of $Y$ (evaluating the position in $X$ at price $P_0=y_0/x_0$) is:
\begin{equation}\label{eq:cost}
c(q)=\frac{2f\,\delta(q)}{1+(1-f)\delta(q)}\,y_0.
\end{equation}

\paragraph{Concavity.} Implicit differentiation of \eqref{eq:push} gives
\[
\delta'(q)=\frac{1}{P_0\bigl((2-f)+2(1-f)\delta(q)\bigr)},
\]
which is strictly decreasing in $q$, so $\delta$ is strictly concave in $q$. The map
$g(\delta)\equiv 2f\delta y_0/(1+(1-f)\delta)$ satisfies $g'>0$ and $g''<0$, so it is
strictly increasing and concave. Since $c=g\circ\delta$, the chain rule gives
\[
c''(q)=g''(\delta(q))\,\delta'(q)^2+g'(\delta(q))\,\delta''(q)<0,
\]
establishing strict concavity of $c$.

\paragraph{Monotonicity of $\overline{c}'/c'$.}
This paragraph verifies that $\overline{c}'(q)/c'(q)$ is strictly increasing in $q$.
First note that
\[
\frac{\overline{c}'(q)}{c'(q)}=- \left(\underbrace{\frac{k}{q^2 c'(q)} }_{\text{fixed term}}+ \underbrace{\frac{c(q)-qc'(q)}{q^2 c'(q)}}_{\text{variable term}} \right).
\]
Both terms are positive, the variable term because $c$ is concave with $c(0)=0$, so
$c(q)\geq qc'(q)$. It therefore suffices to show that each term is strictly decreasing in
$q$. Substitute
\[
c(q)=\dfrac{2fy_0\,\delta(q)}{1+(1-f)\delta(q)},
\qquad
c'(q)=\frac{2fy_0}{\bigl(1+(1-f)\delta(q)\bigr)^2}\,\delta'(q),
\]
and change variable to $d=\delta(q)$, which is strictly increasing in $q$. Letting
\[
\nu(d)\equiv\delta^{-1}(d)=P_0\bigl((1+d)(1+(1-f)d)-1\bigr)=P_0\,d\bigl(2-f+(1-f)d\bigr),
\]
one has $\delta(\nu(d))=d$ and
\[
\delta'(\nu(d))=\frac{1}{\nu'(d)}=\frac{1}{P_0\bigl(2-f+2(1-f)d\bigr)}.
\]

\emph{Fixed term.} Substituting,
\begin{eqnarray*}
\frac{k}{\nu(d)^2\, c'(\nu(d))}s
&=&\frac{k}{P_0^2d^2 \bigl(2-f+(1-f)d\bigr)^2 \cdot \frac{2fy_0}{(1+(1-f)d)^2}\, \frac{1}{P_0 (2-f+2(1-f)d)} }\\
&=&\frac{k}{2fy_0 P_0} \cdot \left( \frac{1+(1-f)d}{d} \right)^2 \cdot \frac{2-f+2(1-f)d}{ \bigl(2-f+(1-f)d\bigr)^2}.
\end{eqnarray*}
The first factor is constant, the second is decreasing because
$(1+(1-f)d)/d=(1/d)+(1-f)$ is, and the third is decreasing because its logarithmic
derivative equals
$2(1-f)\bigl(\tfrac{1}{2-f+2(1-f)d}-\tfrac{1}{2-f+(1-f)d}\bigr)<0$. Hence the fixed term
is strictly decreasing in $d$.

\emph{Variable term.} Substituting,
\begin{eqnarray*}
\frac{ c(\nu(d))-\nu(d) c'(\nu(d))}{\nu(d)^2 c'(\nu(d))}
&=&\frac{\frac{2fy_0d}{1+(1-f)d} - P_0d(2-f+(1-f)d) \cdot \frac{2fy_0}{(1+(1-f)d)^2} \frac{1}{P_0(2-f+2(1-f)d)}}{P_0^2 d^2(2-f+(1-f)d)^2 \cdot \frac{2fy_0}{(1+(1-f)d)^2}\frac{1}{ P_0 (2-f+2(1-f)d)} }\\
&=&\frac{(1+(1-f)d) -  \frac{2-f+(1-f)d}{2-f+2(1-f)d}}{P_0 d(2-f+(1-f)d)^2 \cdot \frac{1}{ 2-f+2(1-f)d} }\\
&=&\frac{(1-f)d +  \frac{(1-f)d}{2-f+2(1-f)d}}{P_0 d(2-f+(1-f)d)^2 \cdot \frac{1}{ 2-f+2(1-f)d} }\\
&=&\frac{(1-f) \left( 3-f+2(1-f)d  \right)}{P_0 (2-f+(1-f)d)^2  }\\
&=& \frac{1-f}{P_0\bigl(2-f+(1-f)d\bigr)} \left(  2- \frac{ 1-f  }{ 2-f+(1-f)d  } \right).
\end{eqnarray*}
The first equality uses the definitions of $c$, $\nu$, and $\delta'$. The second cancels
the common factor $\frac{2fy_0 d}{(1+(1-f)d)^2}$ and simplifies the terms in $P_0$. The
third uses $2-f+(1-f)d=\bigl(2-f+2(1-f)d\bigr)-(1-f)d$. The fourth multiplies through by
$2-f+2(1-f)d$, factors $(1-f)d$ from the numerator, and cancels $d$. The fifth rewrites
$3-f+2(1-f)d=2\bigl(2-f+(1-f)d\bigr)-(1-f)$ and factors out
$\frac{1-f}{P_0(2-f+(1-f)d)}$.

Writing $u(d)\equiv (1-f)/\bigl(2-f+(1-f)d\bigr)$, the last expression equals
$\tfrac{1}{P_0}\,u(d)\bigl(2-u(d)\bigr)$. Since $u(d)<1$ and $u'(d)<0$, and $h(u)=u(2-u)$
is strictly increasing for $u<1$, the variable term is strictly decreasing in $d$.

Since both terms are strictly decreasing in $d$, and $\delta(q)$ is strictly increasing in $q$,
$\overline{c}'(q)/c'(q)$ is strictly increasing in $q$, as required.

\paragraph{Comparative statics of the optimal trimming.}
By Proposition~\ref{prod2P}, the optimal trimming is 
\[\tau^*=\tfrac12-\Delta\,\frac{c'(2\Delta)}{c(2\Delta)+k},\] where $c(q)=y_0\,\gamma(q)$ with:
\[\gamma(q)\equiv 2f\frac{\delta(q)}{1+(1-f)\delta(q)}\] the cost per unit of liquidity. Holding the marginal price $P_0$ fixed, $\delta(\cdot)$, and hence $\gamma$, is independent of $y_0$, so $c'(q)=y_0\,\gamma'(q)$ with $\gamma'>0$, and
\[
\tau^*=\tfrac12-\frac{\Delta\,y_0\,\gamma'(2\Delta)}{y_0\,\gamma(2\Delta)+k},
\]
which is stricly decreasing in $y_0$ whenever $k>0$. Hence $\tau^*$ is non-increasing in the liquidity $y_0$, strictly so when $k>0$; as $y_0\to\infty$ it converges to $\tfrac12-\Delta\,\gamma'(2\Delta)/\gamma(2\Delta)$, the level induced by the variable cost alone, and as $y_0\to 0$ it converges to $1/2$.

\bibliography{auctions.bib}

@article{schros13,
  title={Foreign exchange: The big fix},
  author={Schäfer, Daniel and Ross, Alice and Strauss, Delphine},
  journal={Wall Street Journal},
  volume={29},
  year={2008}
}

@article{ber16,
  title={Foreign exchange market microstructure and the WM/Reuters 4 pm fix},
  author={Michelberger, Patrick Steffen and Witte, Jan Hendrik},
  journal={The Journal of Finance and Data Science},
  volume={2},
  number={1},
  pages={26--41},
  year={2016},
  publisher={Elsevier}
}

@article{eva18,
  title={Forex trading and the WMR fix},
  author={Evans, Martin DD},
  journal={Journal of Banking \& Finance},
  volume={87},
  pages={233--247},
  year={2018},
  publisher={Elsevier}
}

@article{hertro20,
  title={Cheap talk and strategic rounding in Libor submissions},
  author={Hernando-Veciana, {\'A}ngel and Tr{\"o}ge, Michael},
  journal={The Review of Financial Studies},
  volume={33},
  number={6},
  pages={2585--2621},
  year={2020},
  publisher={Oxford University Press}
}

@article{perliv19,
  title={Broken metre: Attacking resource metering in EVM},
  author={Perez, Daniel and Livshits, Benjamin},
  journal={arXiv preprint arXiv:1909.07220},
  year={2019}
}

@inproceedings{weretal22,
  title={Sok: Decentralized finance (defi)},
  author={Werner, Sam and Perez, Daniel and Gudgeon, Lewis and Klages-Mundt, Ariah and Harz, Dominik and Knottenbelt, William},
  booktitle={Proceedings of the 4th ACM Conference on Advances in Financial Technologies},
  pages={30--46},
  year={2022}
}

@techreport{adams2020uniswap,
  title={Uniswap v2 Core},
  author={Adams, Hayden and Zinsmeister, Noah and Robinson, Dan},
  year={2020},
  institution={Uniswap}
}

@article{arora2024secplf,
  title={SecPLF: Secure Protocols for Loanable Funds against Oracle Manipulation Attacks},
  author={Arora, Sanidhay and Li, Yingjiu and Feng, Yebo and Xu, Jiahua},
  journal={arXiv preprint arXiv:2401.08520},
  year={2024}
}

@book{huber,
  title={Robust statistics},
  author={Huber, Peter J},
  volume={523},
  year={2004},
  publisher={John Wiley \& Sons}
}

@article{oos21,
  title={Flash crash for cash: Cyber threats in decentralized finance},
  author={Oosthoek, Kris},
  journal={arXiv preprint arXiv:2106.10740},
  year={2021}
}

@inproceedings{daia20flash,
  title={Flash boys 2.0: Frontrunning in decentralized exchanges, miner extractable value, and consensus instability},
  author={Daian, Philip and Goldfeder, Steven and Kell, Tyler and Li, Yunqi and Zhao, Xueyuan and Bentov, Iddo and Breidenbach, Lorenz and Juels, Ari},
  booktitle={2020 IEEE Symposium on Security and Privacy (SP)},
  pages={910--927},
  year={2020},
  organization={IEEE}
}

@article{edgeworth1888xxii,
  title={XXII. On a new method of reducing observations relating to several quantities},
  author={Edgeworth, Francis Ysidro},
  journal={The London, Edinburgh, and Dublin Philosophical Magazine and Journal of Science},
  volume={25},
  number={154},
  pages={184--191},
  year={1888},
  publisher={Taylor \& Francis}
}

@article{baldauf2022principal,
  title={Principal Trading Arrangements: When Are Common Contracts Optimal?},
  author={Baldauf, Markus and Frei, Christoph and Mollner, Joshua},
  journal={Management Science},
  volume={68},
  number={4},
  pages={3112--3128},
  year={2022},
  publisher={INFORMS}
}

@article{frei2021optimal,
  title={Optimal closing benchmarks},
  author={Frei, Christoph and Mitra, Joshua},
  journal={Finance Research Letters},
  volume={40},
  pages={101674},
  year={2021},
  publisher={Elsevier}
}

@article{duffie2021robust,
  title={Robust benchmark design},
  author={Duffie, Darrell and Dworczak, Piotr},
  journal={Journal of Financial Economics},
  volume={142},
  number={2},
  pages={775--802},
  year={2021},
  publisher={Elsevier}
}

@book{bolton2004contract,
  title={Contract theory},
  author={Bolton, Patrick and Dewatripont, Mathias},
  year={2004},
  publisher={MIT press}
}

@techreport{humphry2024review,
  title={Review of Maximal Extractable Value \& Blockchain Oracles},
  author={Humphry, Rebecca and Avramovic, Pavle and Acevedo, Francis and Alkhair, Sami and Gervais, Arthur},
  year={2024},
  institution={Financial Conduct Authority}
}

@inproceedings{eskandari2021sok,
  title={Sok: Oracles from the ground truth to market manipulation},
  author={Eskandari, Shayan and Salehi, Mehdi and Gu, Wanyun Catherine and Clark, Jeremy},
  booktitle={Proceedings of the 3rd ACM Conference on Advances in Financial Technologies},
  pages={127--141},
  year={2021}
}

@inproceedings{liu2021first,
  title={A first look into defi oracles},
  author={Liu, Bowen and Szalachowski, Pawel and Zhou, Jianying},
  booktitle={2021 IEEE International Conference on Decentralized Applications and Infrastructures (DAPPS)},
  pages={39--48},
  year={2021},
  organization={IEEE}
}

@article{pasdar2021blockchain,
  title={Blockchain oracle design patterns},
  author={Pasdar, Amirmohammad and Dong, Zhongli and Lee, Young Choon},
  journal={arXiv preprint arXiv:2106.09349},
  year={2021}
}

@article{buterin2014schellingcoin,
  title={Schellingcoin: A minimal-trust universal data feed},
  author={Buterin, Vitalik},
  journal={Ethereum blog},
  year={2014}
}

@inproceedings{angeris2020improved,
  title={Improved price oracles: Constant function market makers},
  author={Angeris, Guillermo and Chitra, Tarun},
  booktitle={Proceedings of the 2nd ACM Conference on Advances in Financial Technologies},
  pages={80--91},
  year={2020}
}

@article{williams2019decentralized,
  title={Decentralized common knowledge oracles},
  author={Williams, Austin K and Peterson, Jack},
  journal={arXiv preprint arXiv:1912.01215},
  year={2019}
}

@article{peterson2015augur,
  title={Augur: a decentralized, open-source platform for prediction markets},
  author={Peterson, Jack and Krug, Joseph},
  journal={arXiv preprint arXiv:1501.01042},
  volume={507},
  year={2015}
}

@article{grohar83,
  title={An Analysis of the Principal-Agent Problem},
  author={Grossman, Sanford J and Hart, Oliver D},
  journal={Econometrica},
  volume={51},
  number={1},
  pages={7--46},
  year={1983}
}

@article{jaeckel1971,
  author  = {Jaeckel, Louis A.},
  title   = {Robust Estimates of Location: {S}ymmetry and Asymmetric Contamination},
  journal = {The Annals of Mathematical Statistics},
  year    = {1971},
  volume  = {42},
  number  = {3},
  pages   = {1020--1034}
}

@inproceedings{wahrstaetter2023timetobribe,
  author    = {Wahrstätter, Anton and Heimbach, Lioba and Qin, Kaihua and Svetinovic, Davor and Wattenhofer, Roger and Luzio, Alberto and Bracciali, Andrea and Moreno-Sanchez, Pedro and Gervais, Arthur},
  title     = {Time to Bribe: Measuring Block Construction Market},
  booktitle = {Proceedings of the 2023 ACM SIGSAC Conference on Computer and Communications Security},
  series    = {CCS '23},
  year      = {2023},
  pages     = {1370--1384},
  publisher = {ACM},
  doi       = {10.1145/3576915.3616621}
}

@article{frankel2022improving,
  title={Improving Information from Manipulable Data},
  author={Frankel, Alexander and Kartik, Navin},
  journal={Journal of the European Economic Association},
  volume={20},
  number={1},
  pages={79--115},
  year={2022},
  publisher={Oxford University Press}
}

@misc{chainalysis2023oracle,
  author       = {{Chainalysis}},
  title        = {Oracle Manipulation Attacks Rising: A Unique Concern for {DeFi}},
  year         = {2023},
  howpublished = {Chainalysis Blog},
  note         = {Reports \$403.2 million lost across 41 oracle manipulation attacks in 2022},
  url          = {https://www.chainalysis.com/blog/oracle-manipulation-attacks-rising/}
}

\end{document}